\newcommand{\fra}[2]{\textstyle{\frac{#1}{#2}}}
\newcommand{\beqn}{\begin{eqnarray}\begin{aligned}}
\newcommand{\eqn}{\end{aligned}\end{eqnarray}}
\newtheorem{thm}{Theorem}[section]
\newtheorem{lem}[thm]{Lemma}
\begin{document}

\newcommand\zerosplit{\Tdot  \Tdot }
\newcommand\nextstep{{\LARGE $\rightarrow$}}
%\psset{linewidth=1pt,yunit=5mm,levelsep=15mm,tnsep=2pt}

\begin{titlepage}

\begin{center}

{\Large {\bf Markov invariants and the isotropy\\ subgroup of a quartet tree}}

\vspace{2em}

J G Sumner and P D Jarvis$^{\dagger}$  
\par \vskip 1em \noindent
%{\it $^1$School of Information Technologies, $^3$School of Biological Sciences, $^4$Centre for Mathematical Biology, $^5$Sydney Bioinformatics, University of Sydney, NSW 2006, Australia}\\
{\it School of Mathematics and Physics, University of Tasmania, TAS 7001, Australia}\\
%{\it $^3$School of Biological Sciences, University of Sydney}\\
%{\it $^4$Centre for Mathematical Biology}\\
%{\it $^5$Sydney Bioinformatics}\\
\end{center}
%\footnotetext{ }
\par \vskip .3in \noindent

\vspace{1cm} \noindent\textbf{Abstract} \normalfont 
\\\noindent
The purpose of this article is to show how the isotropy subgroup of leaf permutations on binary trees can be used to systematically identify tree-informative invariants relevant to models of phylogenetic evolution.
In the quartet case, we give an explicit construction of the full set of representations and describe their properties.   
We apply these results directly to Markov invariants, thereby extending previous theoretical results by systematically identifying linear combinations that vanish for a given quartet.
We also note that the theory is fully generalizable to arbitrary trees and is equally applicable to the related case of phylogenetic invariants.
All results follow from elementary consideration of the representation theory of finite groups.

\vfill
\hrule \mbox{} \\
{\footnotesize 
{$^{\dagger}$ Alexander von Humboldt Fellow}\\
{\textit{keywords}: invariants, phylogenetics, group characters, branching rules}\\
{\textit{corresponding author}: Jeremy Sumner, jsumner@utas.edu.au}\\
UTAS-PHYS-2008-15
}
\end{titlepage}

%\tableofcontents

%

\section{Preliminaries}
Phylogenetic methods seek to reconstruct the evolutionary history of organisms from present-day data such as DNA and are of fundamental importance in the biological sciences \citep{felsenstein2004}.
Approaches to this important problem draw upon sophisticated mathematical, statistical and computational techniques (see \citet{gascuel2005} for an overview).
From a purely theoretical point of view, this represents a wonderful confluence of hitherto disparate areas of mathematics. 
In particular, models of phylogenetic evolution require a marriage between graph theory, combinatorics and stochastic processes (a comprehensive treatment can be found in \citet{semple2003}). 
There is also a rich algebraic structure underlying phylogenetic models -- particularly when the complications of working with binary trees is taken into account.
For instance, spectral analysis of the Kimura 3ST model using Hadamard conjugation \citep{hendy1989} and group based approaches to phylogenetic invariants \citep{evans1993} provide novel applications of algebra to phylogenetics.
This article serves as a direct sequel to the algebraic approach applying group representation theory to phylogenetics given in \citet{sumner2008}, where ``Markov invariants'' were defined and explored.

Standard stochastic models of phylogenetic evolution are high-dimensional, with the number of free parameters being proportional to the number of leaves on the evolutionary tree.
Given that DNA sequences are of finite extent, it follows that phylogenetic data sets are often quite sparse and significant model-fitting problems arise with respect to the issue of bias/variance trade-off \citep{burnham2002}.
In this light, Markov invariants provide \emph{one-dimensional} ``representations'' of these stochastic models that retain some of the complex structure of these models, while greatly reducing the number of free parameters present.  
Significantly, Markov invariants are defined to respect the infinitesimal unfolding of a continuous-time Markov chain.
This property is not stipulated in the definition of phylogenetic invariants and there is some evidence (given in \citet{sumner2008}) that this additional structure can assist in the search for ``powerful'' sets of phylogenetic invariants \citep{eriksson2008}.
In particular, it should be noted that the popular Log-Det pairwise distance \citep{steel1994} has as its foundation the simplest example of a Markov invariant.

We say that a Markov invariant is ``tree-informative'' if it satisfies the conditions of a phylogenetic invariant \citep{cavender1987,lake1987} for particular trees.
Here we show how to systematically find linear combinations of Markov invariants that are tree-informative.
An explicit construction is given in the case of quartet trees by studying the irreducible representations of the isotropy subgroup of leaf permutations on quartets. 

Presently we review some basic concepts and terminology from \citet{sumner2008}.
 
Given a group $\mathcal{G}$, recall that a \emph{group representation} is a homomorphism $\rho:\mathcal{G} \rightarrow GL(V)$, where $GL(V)$ is the set of invertible linear operators on a vector space $V$.
This provides an \emph{action} of $\mathcal{G}$ on $V$ and in this case $V$ is referred to as a $\mathcal{G}$-\emph{module} (or, a module of $\mathcal{G}$, or, when the group is understood, simply, a module).
$U\subseteq V$ is said to form an \emph{invariant subspace} if it is closed under the action of $\mathcal{G}$, i.e. $\rho(\mathcal{G})\cdot U\subseteq U$. 

In this article, a tree $\mathcal{T}$ is a connected acyclic graph with vertices of valence 3 or 1 only.
The vertices of valence 1 are referred to as \emph{leaves} and are denoted by $L$ with $m\!:=\!|L|$. 
All results given will be relevant to the \emph{general Markov model} \citep{allman2003} of sequence evolution on a tree (including the IID assumptions), with the additional constraint that all transition matrices are chosen from the Markov semigroup \citep{sumner2008}.
Restricting to the Markov semigroup ensures that the process arises as a continuous-time Markov chain, and allows us to refer to notions of continuity and the infinitesimal.
We denote elements of the Markov semigroup as $M_a$ and employ \emph{right} multiplication so that the matrix element $m^{(a)}_{ji}\!:=\!\left[M_a\right]_{ji}$ represents the probability of a transition $i\rightarrow j$.

In particular, consider random variables defined at the leaves of a tree $X_1,X_2,\ldots, X_m$.
We suppose these random variables take on one of $k$ discrete values with an associated probability distribution 
\beqn
p_{i_1i_2\ldots i_m}:=\mathbb{P}\left[X_1\!=\!i_1,X_2\!=\!i_2,\ldots X_m\!=\!i_m\right].\nonumber
\eqn 
Given the $k$-dimensional vector space $V\cong \mathbb{C}^k$ with basis vectors $\{e_i\}_{1\leq i\leq k}$, the \emph{phylogenetic tensor} $P\in V^{\otimes m}$ is defined as
\beqn
P:=\sum_{1\leq i_1,i_2,\ldots ,i_m\leq k}p_{i_1i_2\ldots i_m}e_{i_1}\otimes e_{i_2}\otimes \ldots \otimes e_{i_m}.\nonumber
\eqn
If this distribution is generated under a Markov assumption (as is standard for phylogenetic models), the ``local'' (no branching events) change of this tensor is described by
\beqn
\label{eq:time-evol}
P'=g\cdot P:=M_1\otimes M_2\otimes \ldots \otimes M_m\cdot P,
\eqn
where each $M_i$ is an element of the Markov semigroup.
Markov invariants (of weight $w$) are defined as functions that take a simple form under this local change:
\beqn
f(P'):=f(g\cdot P)=\det(g)^w f(P).\nonumber
\eqn
As each term in $\det(g)\!=\!\det(M_1)\ldots \det(M_m)$ can be related to expected number of state changes under the model \citep[chap. 8]{semple2003}, we see that a Markov invariant reduces the high-dimensionality of (\ref{eq:time-evol}) to a single parameter that is related to the total number of state changes expected from this process.  
However, as it stands, this definition of Markov invariants says nothing about any underlying tree structure.
It is rectifying this situation that is the main purpose of this article. 

The definition can be viewed as a group action on the Markov invariants themselves by setting $\left(g^{-1}\circ f\right)(P)\!:=\!f(gP)$. 
Thus a Markov invariant transforms under the Markov process as a one-dimensional module of the Markov semigroup:
\beqn
g^{-1}\circ f=\det(g)^wf.\nonumber
\eqn
It should be noted that existence of $g^{-1}$ is guaranteed as all elements of the Markov semigroup are invertible as linear operators (we return to this point in the next section).

By applying Schur-Weyl duality between the symmetric and the general linear groups, existence conditions for such invariants were given in \citet{sumner2008} using inner multiplications of Schur functions.
%\beqn
%\{r+s,r^{k-1}\}\odot \{r+s,r^{k-1}\}\odot\ldots \odot \{r+s,r^{k-1}\}\ni \{d\},\nonumber
%\eqn 
%where $r\equiv w$, $s>0$, $d\!=\! rk+s$ is the degree of the associated Markov invariant, and the inner multiplication is taken $m$ %times.
In particular, in the case of DNA and quartet trees, $k\!=\!4$ and $m\!=\!4$, it was shown that there exist four linearly independent Markov invariants of degree $d\!=\!5$. 

In this article we extend these results by including the ``global'' aspect of the tree and branching process thereof.
Previously this has been achieved by laboriously checking (with a computer) for linear relations between Markov invariants when evaluated on canonical forms of phylogenetic tensors arising from different trees. 
This procedure identified linear combinations of Markov invariants that vanish for certain trees, hence producing tree-informative invariants that satisfy the usual definition of phylogenetic invariants along with respecting the local transformation properties discussed above.
Here we will achieve the same result by studying the transformation properties of Markov invariants under leaf permutations.

Rather than deal with the automorphism group of a tree \citep{godsil2001}, we consider the isotropy subgroup $\mathcal{G}_\mathcal{T}$ of leaf permutations $\mathfrak{S}_{m}\cong \text{Sym}(L)$.
Formally this corresponds to the automorphism group restricted to the leaf vertices:
\beqn
\mathcal{G}_\mathcal{T}\equiv \left. \text{Aut}(\mathcal{T})\right|_L .\nonumber
\eqn
Although it is clear that as abstract groups we have $\mathcal{G}_\mathcal{T}\cong \text{Aut}(\mathcal{T})$ (under the action of an element of $\text{Aut}(\mathcal{T})$ the images of the leaves uniquely determines the image of each internal vertex), it is crucial to our discussion to make this distinction so that $\mathcal{G}_\mathcal{T}$ can be viewed as a subgroup of the symmetric group $\mathfrak{S}_{m}$.
This allows us to define an action of $\mathcal{G}_\mathcal{T}$ on the space of phylogenetic tensors and respects the underlying biology, as it is the labelling of vertices at the leaves that is of primary importance.

In what follows we will deal with the simplest non-trivial case: quartets.
We will derive the multiplication table for the isotropy group of a quartet, compute its conjugacy classes, irreducible representations, character table, and group branching rule upon restriction from $\mathfrak{S}_4$.
In doing so we completely characterize the quartet case and give a clear path to the general theory for larger trees.
All results are applied to Markov invariants, but it should be noted that the technique presented is directly relevant to other structures that arise in phylogenetics including, of course, phylogenetic invariants.
\section{Isotropy subgroups of quartets}
\begin{figure}[tbp]
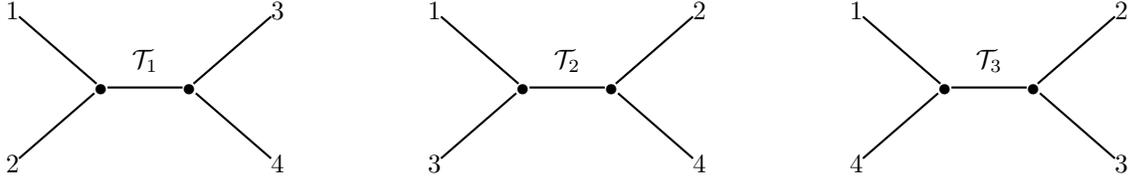

  \centering  
  $\psmatrix[colsep=1cm,rowsep=.6cm]
  1 & & & 3 \\
    & \bullet & \bullet   \\
  2 & & & 4  
  \ncline{2,2}{1,1}
  \ncline{2,2}{3,1}
  \ncline{2,3}{1,4}
  \ncline{2,3}{3,4}
  \ncline{2,2}{2,3}^{\mathcal{T}_1}
  \endpsmatrix
  $
  \hspace{1.7cm}
  $\psmatrix[colsep=1cm,rowsep=.6cm]
  1 & & & 2 \\
    & \bullet & \bullet   \\
  3 & & & 4  
  \ncline{2,2}{1,1}
  \ncline{2,2}{3,1}
  \ncline{2,3}{1,4}
  \ncline{2,3}{3,4}
  \ncline{2,2}{2,3}^{\mathcal{T}_2}
  \endpsmatrix
  $
  \hspace{1.7cm}
  $\psmatrix[colsep=1cm,rowsep=.6cm]
  1 & & & 2 \\
    & \bullet & \bullet   \\
  4 & & & 3  
  \ncline{2,2}{1,1}
  \ncline{2,2}{3,1}
  \ncline{2,3}{1,4}
  \ncline{2,3}{3,4}
  \ncline{2,2}{2,3}^{\mathcal{T}_3}
  \endpsmatrix
  $
    \caption{Unrooted, leaf-labelled quartet trees}
    \label{fig:quartets}
\end{figure}

Consider the three possible unrooted leaf-labelled quartet trees given in Figure~\ref{fig:quartets}.
We can represent each of these quartets as a word from the alphabet $\{$``1'',``2'',``3'',``4'',``$|$''$\}$ in several ways:
\beqn
\mathcal{T}_1&:=12|34\cong 21|34\cong 34|12\ldots ,\nonumber\\ 
\mathcal{T}_2&:=13|24\cong 31|24\cong 24|13\ldots,\\ 
\mathcal{T}_3&:=14|23\cong 41|23\cong 23|14\ldots.
\eqn 
An action of the symmetric group $\mathfrak{S}_4$ on these words is defined by permuting the leaf labels:
\beqn
ij|kl \mapsto\sigma \cdot ij|kl = \sigma(i)\sigma(j)|\sigma(k)\sigma(l),\quad \forall \sigma\in\mathfrak{S}_4.\nonumber
\eqn
For example, using the cycle notation for the symmetric group we have
\beqn
(12)\cdot \mathcal{T}_1&=(12)\cdot 12|34=21|34\cong 12|34=\mathcal{T}_1, \nonumber\\
(123)\cdot \mathcal{T}_1&=(123)\cdot 12|34=23|14\cong 14|23=\mathcal{T}_3,
\eqn
and
\beqn
(13)(24)\cdot\mathcal{T}_1=(13)(24)\cdot 12|34=34|12\cong 12|34=\mathcal{T}_1.\nonumber
\eqn
This group action actually defines a homomorphism $\mathfrak{S}_4 \rightarrow \mathfrak{S}_3 $, as $\mathfrak{S}_4$ acts by permuting the three quartets.
However, this homomorphism will not be of primary interest to us.

Given a group $\mathcal{G}$ acting on a set $X$, the \emph{isotropy subgroup} $\mathcal{G}_x$ of the element $x\in X$ is defined as the set of group elements that leave $x$ fixed:
\beqn
\mathcal{G}_x:=\left\{g\in \mathcal{G}\,|\,g\cdot x=x \right\}.\nonumber
\eqn
It is easy to show that $\mathcal{G}_x$ does indeed form a subgroup. 
(The reader should note that some authors refer to an isotropy subgroup as a ``stabilizer'' subgroup.)

We are interested in the isotropy subgroup of each of the quartet trees:
\beqn
\mathcal{G}_{12|34}:=\left\{\sigma\in \mathfrak{S}_4\,|\,\sigma \cdot 12|34\cong 12|34\right\},\nonumber
\eqn
with $\mathcal{G}_{13|24}$ and $\mathcal{G}_{14|23}$ defined similarly.
By exhaustive search through the elements of $\mathfrak{S}_4$, we find that
\beqn
\mathcal{G}_{12|34}=\left\{e,(12),(34),(12)(34),(13)(24),(14)(23),(1324),(1423)\right\},\nonumber
\eqn
where $e$ denotes the identity element.
This subgroup can be generated from the elements $(1324)$ and $(13)(24)$ so that any element can be expressed as a product of these two.
If we set $a\!=\!(1324)$ and $b\!=\!(13)(24)$ we find that $a^4\!=\!b^2\!=\!e$ and $b^{-1}ab\!=\!a^{-1}$.
In this way we see that $\mathcal{G}_{12|34}$ is isomorphic to the dihedral group $D_8$; the symmetry group of a square.
 
Recall that, for finite trees, a ``rotation'' is defined as an element of $\text{Aut}(\mathcal{T})$ (excluding the identity) that fixes at least one vertex of $\mathcal{T}$, whereas a ``reflection'' flips at least one internal edge \citep{gawron1999}.
Thus, referring to Figure~\ref{fig:quartets} we see that $(12)$, $(34)$ and $(12)(34)$ are rotations, while $(13)(24)$, $(14)(23)$, $(1324)$ and $(1423)$ are reflections.

In this article we consider phylogenetic tensors that are constructed using transition matrices chosen from the Markov semigroup.
Recall that every element $M$ of the Markov semigroup satisfies $0< \det(M)\leq 1$, with $\det(M)\!=\! 1$ occurring only in the trivial case where $M$ is the identity operator \citep{sumner2008}.
Thus if we assume that all transition matrices are non-trivial, thereby ensuring non-zero branch lengths and \emph{binary} evolutionary trees, we can apply identifiability of tree topology \citep{chang1996} and conclude that the phylogenetic tensors on quartets can be partitioned into disjoint subsets, with each subset corresponding to a quartet.
Thus, if we denote the set of phylogenetic tensors as $V^{\mathcal{T}_i}\subset V^{\otimes 4}$, where $\mathcal{T}_i$ is a quartet and $V\cong \mathbb{C}^k$, we have: 
\beqn
V^{\mathcal{T}_i}\cap V^{\mathcal{T}_j}=\emptyset,\quad \forall i\neq j. \nonumber
\eqn
It should be noted that these are sub\emph{sets} and clearly not sub\emph{spaces} of the vector space $V^{\otimes 4}$.
In fact, the recent non-identifiability result for phylogenetic mixtures of \citet{matsen2007} imply that each $V^{\mathcal{T}_i}$ is not even closed under real, convex linear combinations.
However, this will not affect any of the results discussed in the present work: we will simply have to replace the phrase ``invariant subspace'' with ``invariants subset'', where relevant.

There is an action of $\mathfrak{S}_4$ on $V^{\otimes 4}$ defined as
\beqn
\sigma\psi:=\sum_{i_1,\ldots ,i_4}\psi_{i_1i_2i_3i_4}e_{i_{\sigma(1)}}\otimes e_{i_{\sigma(2)}}\otimes  e_{i_{\sigma(3)}}\otimes  e_{i_{\sigma(4)}}\nonumber.
\eqn
Informally, this is equivalent to writing
\beqn\label{eq:perminformal}
\sigma\cdot \psi_{i_1i_2i_3i_4} =\psi_{i_{\bar{\sigma}(1)} i_{\bar{\sigma}(2)} i_{\bar{\sigma}(3)} i_{\bar{\sigma}(4)}}, 
\eqn
where, for ease of reading, we have set $\bar{\sigma}\equiv\sigma^{-1}$.
Clearly this induces an action of $\mathcal{G}_{12|34}$ on the set of phylogenetic tensors.

\begin{lem}
\label{lemma1}
$V^{\mathcal{T}_1}$ forms an invariant subset under the action of $\mathcal{G}_{12|34}$.
Further,
\beqn
\sigma V^{\mathcal{T}_2}&\subseteq V^{\mathcal{T}_2},\qquad
\sigma V^{\mathcal{T}_3}&\subseteq V^{\mathcal{T}_3},\nonumber
\eqn
if $\texttt{sgn}(\sigma)=1$, and 
\beqn
\sigma V^{\mathcal{T}_2}&\subseteq V^{\mathcal{T}_3},\qquad
\sigma V^{\mathcal{T}_3}&\subseteq V^{\mathcal{T}_2},\nonumber
\eqn
if $\texttt{sgn}(\sigma)=-1$, for all $\sigma\in \mathcal{G}_{12|34}$.
\end{lem}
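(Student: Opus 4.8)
The plan is to reduce the lemma to the compatibility of the two actions of $\mathfrak{S}_4$ introduced above: the action on leaf-labelled quartets (the words $ij|kl$) and the action on $V^{\otimes 4}$ from (\ref{eq:perminformal}). The crucial point is that these are intertwined by the very construction of a phylogenetic tensor: if $P\in V^{\mathcal{T}_i}$, then $\sigma P$ is again a phylogenetic tensor, generated by the \emph{same} collection of Markov transition matrices arranged along the \emph{same} tree but with its leaves relabelled --- by $\sigma$, or by $\bar\sigma$, according to the index convention of (\ref{eq:perminformal}); either way the relabelled tree is one of the words representing $\sigma\cdot\mathcal{T}_i$, and the ambiguity between $\sigma$ and $\bar\sigma$ is harmless for the conclusion since $\mathcal{G}_{12|34}$ is a group and $\texttt{sgn}$ is invariant under inversion. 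Relabelling leaves changes no branch length, so the relabelled tree is still binary with non-trivial edges and identifiability of the topology \citep{chang1996} applies, giving $\sigma P\in V^{\sigma\cdot\mathcal{T}_i}$. Thus it remains only to compute the action of each $\sigma\in\mathcal{G}_{12|34}$ on the three quartets.

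For $\mathcal{T}_1$ this is immediate: by the very definition of $\mathcal{G}_{12|34}$ one has $\sigma\cdot\mathcal{T}_1\cong\mathcal{T}_1$ for all $\sigma\in\mathcal{G}_{12|34}$, hence $\sigma V^{\mathcal{T}_1}\subseteq V^{\mathcal{T}_1}$, which is the first assertion. For $\mathcal{T}_2$ and $\mathcal{T}_3$ I would use the homomorphism $\mathfrak{S}_4\to\mathfrak{S}_3$ that permutes the three quartets: restricted to $\mathcal{G}_{12|34}$ it lands in the stabiliser of $\mathcal{T}_1$ in $\mathfrak{S}_3$, namely the order-two subgroup generated by the transposition $(\mathcal{T}_2\ \mathcal{T}_3)$. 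This defines a homomorphism $\chi:\mathcal{G}_{12|34}\to\{\pm1\}$ with $\chi(\sigma)=-1$ precisely when $\sigma$ interchanges $\mathcal{T}_2$ and $\mathcal{T}_3$, and the lemma (for $i=2,3$) is exactly the assertion $\chi=\texttt{sgn}$ on $\mathcal{G}_{12|34}$. Since $\chi$ and $\texttt{sgn}$ are both homomorphisms, it suffices to compare them on the generators $a=(1324)$ and $b=(13)(24)$: from the word action, $a\cdot(13|24)=32|41\cong 23|14=\mathcal{T}_3$ while $b\cdot(13|24)=31|42\cong 13|24=\mathcal{T}_2$, so $\chi(a)=-1=\texttt{sgn}(a)$ and $\chi(b)=+1=\texttt{sgn}(b)$; therefore $\chi=\texttt{sgn}$ and the remaining inclusions follow. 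As a consistency check one can run through all eight elements directly: the even ones $\{e,(12)(34),(13)(24),(14)(23)\}$ fix each of the three quartets, and the odd ones $\{(12),(34),(1324),(1423)\}$ each fix $\mathcal{T}_1$ and swap $\mathcal{T}_2$ with $\mathcal{T}_3$.

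The step I expect to be the real obstacle is the first one --- pinning down that $\psi\mapsto\sigma\psi$ corresponds to relabelling the leaves of the generating tree, \emph{with the correct handedness}. This requires unwinding the definition of a phylogenetic tensor on a tree (the iterated contraction of Markov matrices along its edges) and carefully matching the permutation of tensor slots in (\ref{eq:perminformal}) against the permutation of leaf labels in $ij|kl\mapsto\sigma(i)\sigma(j)|\sigma(k)\sigma(l)$. Once that bookkeeping is done, the rest is the finite verification above together with the already-invoked identifiability result; no representation theory is needed at this stage.
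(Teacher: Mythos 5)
Your proposal is correct and follows essentially the same route as the paper: the paper's proof likewise reduces the lemma to the observation that $\mathcal{G}_{12|34}\cdot \mathcal{T}_1=\mathcal{T}_1$ by definition together with the finite check that $\sigma\cdot\mathcal{T}_2=\mathcal{T}_2$ when $\texttt{sgn}(\sigma)=1$ and $\sigma\cdot\mathcal{T}_2=\mathcal{T}_3$ when $\texttt{sgn}(\sigma)=-1$. The only real difference is that the paper explicitly carries out the tensor-level bookkeeping you flag as the ``real obstacle''---writing $P=M_1\otimes M_2\otimes M_3\otimes M_4\cdot\widetilde{P}$ for a trimmed tensor $\widetilde{P}$ and computing the action of the generators $(1324)$ and $(13)(24)$ on $\widetilde{P}$ directly---while your organization of the combinatorial check via the homomorphism $\chi:\mathcal{G}_{12|34}\to\{\pm 1\}$ and the identity $\chi=\texttt{sgn}$ on generators is a tidy equivalent of the paper's case-by-case verification.
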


\begin{proof}
This result follows easily by noting that $\mathcal{G}_{12|34}\cdot \mathcal{T}_1=\mathcal{T}_1$ by definition, and checking that $\sigma\cdot  \mathcal{T}_2=\mathcal{T}_2$ if $\texttt{sgn}(\sigma)=1$ and $\sigma\cdot  \mathcal{T}_2=\mathcal{T}_3$ if $\texttt{sgn}(\sigma)=-1$.
However, we confirm the proof explicitly to illustrate the way the symmetric group acts on phylogenetic tensors.

The components of any phylogenetic tensor $P\in V^{\mathcal{T}_1}$ can be expressed as 
\beqn
p_{i_1i_2i_3i_4}=\sum_{1\leq i,j\leq k}m^{(1)}_{i_1i}m^{(2)}_{i_2i}m^{(3)}_{i_3j}m^{(4)}_{i_4j}m^{(0)}_{ji}\pi_{i},\nonumber 
\eqn
where, for each $a$, $m^{(a)}_{ji}$ are the matrix elements of an element $M_a$ of the Markov semigroup. 
We have (arbitrarily) chosen to root the quartet at the parent vertex of leaf 1 and 2 with root distribution $\pi$ (see Figure~\ref{fig:quartettensor}).

The ``trimmed'' tensor $\widetilde{P}$ \citep{sumner2008} is generated from $P$ by trimming off the pendant edges of the tree or, more precisely, by setting each transition matrix on a pendant edge equal to the identity matrix:
\beqn
\label{eq:trimmed}
\widetilde{p}_{i_1i_2i_3i_4} =\sum_{1\leq i,j\leq k}\delta_{i_1i}\delta_{i_2i}\delta_{i_3j}\delta_{i_4j}m^{(0)}_{ji}\pi_{i}= \delta_{i_1i_2}\delta_{i_3i_4} m^{(0)}_{i_3i_1}\pi_{i_1}.
\eqn
We can write $P=M_1\otimes M_2 \otimes M_3 \otimes M_4\cdot \widetilde{P}$, and observe that $\mathcal{G}_{12|34}$ acts as
\beqn
\sigma P  = M_{\sigma(1)}\otimes M_{\sigma(2)}\otimes M_{\sigma(3)} \otimes M_{\sigma(4)}\cdot \sigma \widetilde{P}\nonumber.
\eqn
Because permuting the transition matrices on the pendant edges will not change which quartet the tensor corresponds to, we need only consider $\sigma \widetilde{P}$, and we need only check the lemma for the elements $(1324)$ and $(13)(24)$, as these form a generating set for $\mathcal{G}_{12|34}$.
Referring to (\ref{eq:trimmed}) and (\ref{eq:perminformal}) we find that
\beqn
(1324)\cdot\widetilde{p}_{i_1i_2i_3i_4} &= \delta_{i_4i_3}\delta_{i_1i_2} m^{(0)}_{i_1i_4}\pi_{i_4},\nonumber
\eqn
and
\beqn
(13)(24)\cdot\widetilde{p}_{i_1i_2i_3i_4} &= \delta_{i_3i_4}\delta_{i_1i_2} m^{(0)}_{i_1i_3}\pi_{i_3}.\nonumber
\eqn
Thus $(1324)\widetilde{P}=(13)(24)\widetilde{P}$, and we see that this tensor belongs to $V^{\mathcal{T}_1}$ (although it corresponds to a quartet rooted at the parent vertex of leaves 3 and 4).

The lemma follows from a similar consideration for phylogenetic tensors belonging to $V^{\mathcal{T}_2}$ and $V^{\mathcal{T}_3}$.
\end{proof}

\begin{figure}[tbp]
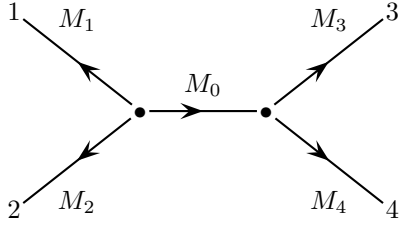

  \centering  
  
  $\psmatrix[colsep=1.5cm,rowsep=.9cm]
  1 & & & 3 \\
    & \bullet &\bullet  \\
  2 & & & 4  
  \psset{ArrowInside=->,nodesep=1pt}
  \psset{arrowscale=2}
  \ncline{2,2}{1,1}^{M_{1}}
  \ncline{2,2}{3,1}_{M_{2}}
  \ncline{2,3}{1,4}^{M_{3}}
  \ncline{2,3}{3,4}_{M_{4}}
  \ncline{2,2}{2,3}^{M_{0}}
  \endpsmatrix
  $
 % \hspace{1cm}
%  $\psmatrix[colsep=.22cm,rowsep=.05cm]
% 		1&&&&&&&3\\
%    &\bullet &&&&&\bullet\\
%  	2&&&&&&&4    
%  \ncline{2,2}{1,1}
%  \ncline{2,2}{3,1}
%  \ncline{2,7}{1,8}
%  \ncline{2,7}{3,8}
%  \psset{ArrowInside=->,nodesep=1pt}
%  \psset{arrowscale=2}
%  \ncline{2,2}{2,7}^{M_{0}}
%  \endpsmatrix
%  $
    \caption{Quartet tensor}
    \label{fig:quartettensor}
\end{figure}

We note that there is an obvious analogous structure for the action of $\mathcal{G}_{13|24}$ and $\mathcal{G}_{14|23}$.

Lemma~\ref{lemma1} further illuminates our decision to study isotropy subgroups rather than automorphism groups, and we believe that this reflects the underlying biology of the situation as well.
For instance, it is clear that a phylogenetic method for quartets that returns the quartet tree $12|34$ for a given data set should continue to return $12|34$ even as the input sequences are permuted using elements of $\mathcal{G}_{12|34}$, whereas it is not possible to define an action of $\text{Aut}(\mathcal{T}_1)$ on the input sequences.

\section{Finding tree-informative invariants}
The space of homogeneous degree $d$ polynomials $\mathcal{P}_d(V^{\otimes m})$ carries a representation of $\mathfrak{S}_m$ defined by
\beqn
\sigma^{-1}\circ f(\psi):=f(\sigma\psi),\nonumber
\eqn
with $\psi\in V^{\otimes m}$.
As an example, taking $m\!=\!4$, $d\!=2\!$, we can write 
\beqn
f(\psi)&=\sum_{i_1,\ldots , i_4,j_1,\ldots ,j_4}f_{i_1i_2i_3i_4j_1j_2j_3j_4}\psi_{i_1i_2i_3i_4}\psi_{j_1j_2j_3j_4},\nonumber\\
(123)^{-1}\circ f(\psi)&=\sum_{i_1,\ldots , i_4,j_1,\ldots ,j_4}f_{i_1i_2i_3i_4j_1j_2j_3j_4}\psi_{i_3i_1i_2i_4}\psi_{j_3j_1j_2j_4},
\eqn
so it is apparent that
\beqn
\left[(123)^{-1}\circ f\right]_{i_1i_2i_3i_4j_1j_2j_3j_4}=f_{i_2i_3i_1i_4j_2j_3j_1j_4}.\nonumber
\eqn

From \citet{sumner2008} we know that there exist degree $d\!=\! 5$ Markov invariants for quartet tensors:
\beqn
F:=\left\{f\in \mathcal{P}_5(V^{\otimes 4})\,|\,g^{-1}\circ f=\det(g)f \right\},\nonumber
\eqn 
where $g=M_1\otimes M_2\otimes M_3\otimes M_4$ and each $M_i$ is an element of the Markov semigroup.
Additionally, by considering the inner multiplication of Schur functions it was shown that $\dim(F)\!=\!4$.
Our purpose in the present work is to show how to find linear combinations of these invariants that are tree informative for a given quartet.

\begin{lem}
\label{lemma2}
At a given degree $d$, the subset $W\subset \mathcal{P}_d(V^{\otimes m})$ of phylogenetic invariants for a tree $\mathcal{T}$ is an invariant subspace under the action of $\mathcal{G}_\mathcal{T}$.
%Additionally, the subspace $F_1\leq F$ of Markov invariants that are simultaneously phylogenetic invariants for $\mathcal{T}_1$ (ie. $F_1=\{f\in F|f(P)=0, \forall P \in V^{\mathcal{T}_1}\}$), is an invariant subspace of .
\end{lem}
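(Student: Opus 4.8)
The plan is to separate the (essentially trivial) linear-algebra content from the one substantive point, namely that the isotropy subgroup stabilises the subset of phylogenetic tensors for $\mathcal{T}$. Recall that $W$ consists of those $f\in\mathcal{P}_d(V^{\otimes m})$ with $f(P)=0$ for every $P\in V^{\mathcal{T}}$. The condition $f|_{V^{\mathcal{T}}}\equiv 0$ is linear in $f$, so $W$ is automatically a linear subspace of $\mathcal{P}_d(V^{\otimes m})$; moreover relabelling the arguments of a homogeneous degree-$d$ polynomial again produces a homogeneous degree-$d$ polynomial and does so linearly, so the $\mathfrak{S}_m$-module structure on $\mathcal{P}_d(V^{\otimes m})$ restricts to a genuine action of the subgroup $\mathcal{G}_\mathcal{T}$. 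Hence the only thing left to prove is that $W$ is closed under this action.

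The key step is the general-tree analogue of (the first assertion of) Lemma~\ref{lemma1}: for every $\sigma\in\mathcal{G}_\mathcal{T}$ one has $\sigma V^{\mathcal{T}}\subseteq V^{\mathcal{T}}$. I would argue this exactly as in the proof of Lemma~\ref{lemma1}. Writing a phylogenetic tensor on $\mathcal{T}$ as $P=M_1\otimes\cdots\otimes M_m\cdot\widetilde{P}$ with $\widetilde{P}$ the trimmed tensor of $\mathcal{T}$, one has $\sigma P=M_{\sigma(1)}\otimes\cdots\otimes M_{\sigma(m)}\cdot\sigma\widetilde{P}$, and permuting the transition matrices along the pendant edges cannot change which tree the tensor corresponds to; it therefore suffices to observe that $\sigma\widetilde{P}$ is a trimmed tensor for the tree $\sigma\cdot\mathcal{T}$, and that $\sigma\cdot\mathcal{T}\cong\mathcal{T}$ as a leaf-labelled tree precisely because $\sigma\in\mathcal{G}_\mathcal{T}$ (the relabelling possibly relocates the root or renames internal vertices, neither of which affects membership in $V^{\mathcal{T}}$). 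Thus $\sigma P\in V^{\mathcal{T}}$. In the quartet case this is nothing but Lemma~\ref{lemma1} together with its obvious counterparts for $\mathcal{G}_{13|24}$ and $\mathcal{G}_{14|23}$, so no additional work is required there.

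Given this, closure of $W$ is immediate: for $f\in W$, $\sigma\in\mathcal{G}_\mathcal{T}$ and any $P\in V^{\mathcal{T}}$ we have $(\sigma^{-1}\circ f)(P)=f(\sigma P)=0$, since $\sigma P\in V^{\mathcal{T}}$ by the previous paragraph and $f$ vanishes on $V^{\mathcal{T}}$. Hence $\sigma^{-1}\circ f\in W$, and as $\sigma^{-1}$ ranges over all of $\mathcal{G}_\mathcal{T}$ while $\sigma$ does, we obtain $\mathcal{G}_\mathcal{T}\cdot W\subseteq W$. Combined with the fact that $W$ is a linear subspace of $\mathcal{P}_d(V^{\otimes m})$, this shows $W$ is an invariant subspace under $\mathcal{G}_\mathcal{T}$, as claimed. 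The only genuine obstacle I anticipate is the bookkeeping in the general-tree version of $\sigma V^{\mathcal{T}}\subseteq V^{\mathcal{T}}$ — making precise that transporting the transition matrices and the root along the tree automorphism inducing $\sigma$ yields another general Markov tensor on $\mathcal{T}$ — but for the quartet setting treated here this is already furnished by Lemma~\ref{lemma1}.
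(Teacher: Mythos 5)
Your proposal is correct and follows essentially the same route as the paper: the core step is the identical computation $(\sigma^{-1}\circ f)(P)=f(\sigma P)=0$, with the paper simply asserting $\sigma P\in V^{\mathcal{T}}$ ``by definition'' (resting on Lemma~\ref{lemma1} in the quartet case) where you spell out the justification in more detail.
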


\begin{proof}
Taking $z\in W$, $P\in V^{\mathcal{T}}$, and $\sigma\in \mathcal{G}_\mathcal{T}$ we have 
\beqn
\sigma^{-1}\circ z (P)=z(\sigma P)=0,\nonumber
\eqn
because $\sigma P\in V^{\mathcal{T}}$ by definition.
\end{proof}

For example, it is clear by inspection that the quartet invariants given at the end of \citet{evans1993} form an invariant subspace of $\mathcal{G}_{12|34}$, as required. 
At the end of this section we will examine the invariants given in that work more closely.

In the context of this article we are interested in finding the subspace of Markov invariants that are simultaneously phylogenetic invariants for $\mathcal{T}_1$, i.e. $f\in F$ such that $f(P)=0$ for all $P \in V^{\mathcal{T}_1}$.
As any invariant subspace must occur as a direct sum of irreducible modules, our immediate task is to identify the irreducible representations of $\mathcal{G}_{12|34}$.
For convenience, in Table~\ref{tab:multtab} we present the multiplication table of $\mathcal{G}_{12|34}$.

\begin{table}[t]
\centering
\begin{tabular}{|c|ccccccc|}
\hline
 & (12) & (34) & (12)(34) & (13)(24) & (14)(23) & (1324) & (1423) \\
\hline
(12) & e & (12)(34) & (34) & (1324) & (1423) & (13)(24) & (14)(23) \\
%\hline
(34) & (12)(34) & e & (12) & (1423) & (1324) & (14)(23) & (13)(24) \\
%\hline
(12)(34) & (34) & (12) & e & (14)(23) & (13)(24) & (1423) & (1324) \\
%\hline
(13)(24) & (1423) & (1324) & (14)(23) & e & (12)(34) & (34) & (12) \\
%\hline
(14)(23) & (1324) & (1423) & (13)(24) & (12)(34) & e & (12) & (34) \\
%\hline
(1324) & (14)(23) & (13)(24) & (1423) & (12) & (34) & (12)(34) & e \\
%\hline
(1423) & (13)(24) & (14)(23) & (1324) & (34) & (12) & e & (12)(34) \\
\hline
\end{tabular}
\caption{The group multiplication table of $\mathcal{G}_{12|34}$.}
\label{tab:multtab}
\end{table}

Recall that the irreducible representations of a finite group can be put in one-to-one correspondence with its conjugacy classes, and the sum of the dimension of each irreducible representation squared is equal to the order of the group (see \citet{sagan2001} for example). 
Referring to Table~\ref{tab:multtab}, we go ahead and explicitly compute by hand the conjugacy classes of $\mathcal{G}_{12|34}$.
We find that there are five classes: %$\left\{e\right\},\left\{(12),(34)\right\},\left\{(12)(34)\right\},\left\{(13)(24),(14)(23)\right\}$ and $\left\{(1324),(1423)\right\}$,
\beqn
\left[e\right]&:=\left\{e\right\},\nonumber\\
[(12)]&:=\left\{(12),(34)\right\},\\
[(12)(34)]&:=\left\{(12)(34)\right\},\\
[(13)(24)]&:=\left\{(13)(24),(14)(23)\right\},\\
[(1324)]&:=\left\{(1324),(1423)\right\},
\eqn
and thus conclude that there are five irreducible representations of $\mathcal{G}_{12|34}$.
It is satisfying to note that this result can be confirmed using the combinatorial formula given in \citet{orellana2004}.

Additionally, we can infer that four of these representations are one-dimensional while the other is two-dimensional, as $1^2+1^2+1^2+1^2+2^2=8$ is the only 5 part partition of 8 into a sum of squares.
We denote the four one-dimensional representations as $\texttt{id}$, $\texttt{sgn}$, $d_1$, $d_2$, and the two-dimensional representation as $C$.

It is useful to note that $(12)(34)$ forms its own conjugacy class. 
This should be compared to the case for $\mathfrak{S}_4$ where $(12)(34),(13)(24)$ and $(14)(23)$ form a single conjugacy class and is due to the fact that $(12)(34)$ is a rotation, while $(13)(24)$ and $(14)(23)$ are reflections.
Using the well known orthogonality relations for characters \citep{sagan2001}, the character table of $\mathcal{G}_{12|34}$ is easy to derive and is presented in Table~\ref{tab:chartab}.

The reader is reminded that the conjugacy classes (and hence irreducible representations) of $\mathfrak{S}_m$ are labelled by partitions of $m$ with $\texttt{id}\equiv (1^m)$ and $\texttt{sgn}\equiv (m)$.
For convenience, we reproduce the character table of $\mathfrak{S}_4$ in Table~\ref{tab:chartabS4}.

Recall that a (group) \emph{branching rule} describes the decomposition of the irreducible representations of a group when restricted to a subgroup \citep{weyl1950}.
By staring at the character tables (Table~\ref{tab:chartab} and Table~\ref{tab:chartabS4}) and concentrating on the conjugacy class $[(12)(34)]$ in $\mathfrak{S}_4$ compared to the same class in $\mathcal{G}_{12|34}$, it is straightforward to derive the group branching rules:
\beqn\label{eq:branchingrules}
\texttt{id} &\rightarrow \texttt{id} \\
 \texttt{sgn} &\rightarrow \texttt{sgn} \\
\mathfrak{S}_4 \downarrow \mathcal{G}_{12|34}: \hspace{2em} \{31\} &\rightarrow C + d_2 \\
 \{2^2\} &\rightarrow \texttt{id} + \texttt{sgn} \\
 \{21^2\} &\rightarrow C + d_1.
\eqn 

\begin{table}[t]
\centering
\begin{tabular}{|c|ccccc|}
\hline
 & $\texttt{id}$ & $\texttt{sgn}$ & $d_1$ & $d_2$ & $C$ \\
\hline
$e$ & 1 & 1 & 1 & 1 & 2 \\
%\hline
$[(12)]$ & 1 & -1\hspace{.35em} & -1\hspace{.35em} & 1 & 0 \\
%\hline
$[(12)(34)]$ & 1 & 1 & 1 & 1 & -2\hspace{.35em} \\
%\hline
$[(13)(24)]$ & 1 & 1 & -1\hspace{.35em} & -1\hspace{.35em} & 0 \\
%\hline
$[(1324)]$ & 1 & -1\hspace{.35em} & 1 & -1\hspace{.35em} & 0\\
\hline
\end{tabular}
\caption{The character table of $\mathcal{G}_{12|34}$.}
\label{tab:chartab}
\end{table}

\begin{table}[t]
\centering
\begin{tabular}{|c|ccccc|}
\hline
 & $\texttt{id}$ & $\texttt{sgn}$ & $(31)$ & $(2^2)$ & $(21^2)$ \\
\hline
$e$ & 1 & 1 & 3 & 2 & 3 \\
%\hline
$[(12)]$ & 1 & -1\hspace{.35em} & 1 & 0 & -1\hspace{.35em} \\
%\hline
$[(123)]$ & 1 & 1 & 0 & -1\hspace{.35em} & 0 \\
%\hline
$[(12)(34)]$ & 1 & 1 & -1\hspace{.35em} & 2 & -1\hspace{.35em} \\
%\hline
$[(1234)]$ & 1 & -1\hspace{.35em} & -1\hspace{.35em} & 0 & 1 \\
\hline
\end{tabular}
\caption{The character table of $\mathfrak{S}_4$.}
\label{tab:chartabS4}
\end{table}

Given that $F$ is a module for $\mathfrak{S}_4\downarrow \mathcal{G}_{12|34}$, we would like to examine the structure of Markov invariants in each irreducible module thereof.
This will reveal exactly when an invariant is tree-informative.

Recall that the \emph{primitive idempotents} \citep{procesi2007} of the group algebra $\mathbb{C}\left[\mathcal{G}\right]$ are 
\beqn
\Theta_{\chi}:=\frac{1}{|\mathcal{G}|}\sum_{\sigma \in \mathcal{G}}\chi(\sigma)\sigma,\nonumber
\eqn
where $\chi$ is an irreducible character.
These primitive idempotents satisfy the orthogonality conditions $\Theta_{\chi}\cdot \Theta_{\chi'}=\delta_{\chi\chi'}\Theta_{\chi}$,
and, given a $\mathcal{G}$-module $V$, project onto the irreducible subspaces of $V$.

We are, of course, interested in $\mathcal{G}=\mathcal{G}_{12|34}$ and will consider properties of an arbitrary $f\in F$ under the projections $\Theta_{\chi}\circ f$ for each irreducible character of $\mathcal{G}_{12|34}$.
In what follows we use the fact that $\chi(\sigma^{-1})=\overline{\chi(\sigma)}$ for finite groups, thus
\beqn
\Theta_{\chi}\circ f&=\frac{1}{8}\sum_{\sigma \in \mathcal{G}_{12|34}}\chi(\sigma)\sigma\circ f=\frac{1}{8}\sum_{\sigma \in \mathcal{G}_{12|34}}\chi(\sigma^{-1})\sigma^{-1}\circ f=\frac{1}{8}\sum_{\sigma \in \mathcal{G}_{12|34}}\chi(\sigma)\sigma^{-1}\circ f,\nonumber
\eqn
where the second equality holds because the map $\sigma\mapsto \sigma^{-1}$ is simply a permutation of the group elements and the third equality holds because the irreducible characters of $\mathcal{G}_{12|34}$ are real.

For convenience we take the trimmed tensor $\widetilde{P}_1\in V^{\mathcal{T}_1}$ as before with root placed at the parent vertex of leaves 1 and 2.
This tensor has components
\beqn
\widetilde{p}_{i_1i_2i_3i_4}=\delta_{i_1i_2}\delta_{i_3i_4} m^{(0)}_{i_3i_1}\pi_{i_1}.\nonumber
\eqn 
Define the ``reflected'' trimmed tensor $\widetilde{P}_1^r$ as
\beqn
\widetilde{P}_1^r=(13)(24)\widetilde{P}_1,\nonumber
\eqn
so that $\widetilde{P}_1^r$ is obtained by moving the root vertex to the parent of leaves 3 and 4.
The trimmed tensors $\widetilde{P}_2,\widetilde{P}_3$ and their reflected  counterparts $\widetilde{P}^r_2,\widetilde{P}^r_3$ are defined similarly.
In Table~\ref{tab:actions} we explicitly record the action of $\mathcal{G}_{12|34}$ on each of these trimmed tensors.

\begin{table}
\renewcommand{\arraystretch}{1.4}
\centering
\begin{tabular}[h]{|c|ccc|}
\hline
$\sigma\in\mathcal{G}_{12|34}$ & $\sigma\widetilde{P}_1$ & $\sigma\widetilde{P}_2$ & $\sigma\widetilde{P}_3$  \\
\hline
$e$ & $\widetilde{P}_1$ & $\widetilde{P}_2$ & $\widetilde{P}_3$  \\
%\hline
$(12)$ & $\widetilde{P}_1$ & $\widetilde{P}_3^r$ & $\widetilde{P}_2^r$  \\
%\hline
$(34)$ & $\widetilde{P}_1$ & $\widetilde{P}_3$ & $\widetilde{P}_2$  \\
%\hline
$(12)(34)$ & $\widetilde{P}_1$ & $\widetilde{P}_2^r$ & $\widetilde{P}_3^r$  \\
%\hline
$(13)(24)$ & $\widetilde{P}_1^r$ & $\widetilde{P}_2$ & $\widetilde{P}_3^r$  \\
%\hline
$(14)(23)$ & $\widetilde{P}_1^r$ & $\widetilde{P}_2^r$ & $\widetilde{P}_3$  \\
%\hline
$(1324)$ & $\widetilde{P}_1^r$ & $\widetilde{P}_3^r$ & $\widetilde{P}_2$  \\
%\hline
$(1423)$ & $\widetilde{P}_1^r$ & $\widetilde{P}_3$ & $\widetilde{P}_2^r$  \\
\hline
\end{tabular}
\caption{Action of $\mathcal{G}_{12|34}$ on trimmed tensors.}
\label{tab:actions}
\end{table}

Now using the character table for $\mathcal{G}_{12|34}$, we can infer any tree-informative identities that occur between the values of $\Theta_{\chi}\circ f(\widetilde{P}_i)$ for $i=1,2,3$ and each irreducible character $\chi$.

For the $\texttt{id}$ representation we have
\beqn
\Theta_{\texttt{id}}\circ f(\widetilde{P}_1)&:=\frac{1}{8}\sum_{\sigma \in \mathcal{G}_{12|34}}\chi_{\texttt{id}}(\sigma)f(\sigma \widetilde{P}_1)\\\nonumber
&=\frac{1}{8}\left[f(\widetilde{P}_1)+f(\widetilde{P}_1)+f(\widetilde{P}_1)+f(\widetilde{P}_1)+f(\widetilde{P}_1^r)+f(\widetilde{P}_1^r)+f(\widetilde{P}_1^r)+f(\widetilde{P}_1^r)\right]\\
&=\frac{1}{2}\left[f(\widetilde{P}_1)+f(\widetilde{P}_1^r)\right],
\eqn
\beqn
\Theta_{\texttt{id}}\circ f(\widetilde{P}_2)&:=\frac{1}{8}\sum_{\sigma \in \mathcal{G}_{12|34}}\chi_{\texttt{id}}(\sigma)f(\sigma \widetilde{P}_2)\\\nonumber
&=\frac{1}{8}\left[f(\widetilde{P}_2)+f(\widetilde{P}_3^r)+f(\widetilde{P}_3)+f(\widetilde{P}_2^r)+f(\widetilde{P}_2)+f(\widetilde{P}_2^r)+f(\widetilde{P}_3^r)+f(\widetilde{P}_3)\right]\nonumber\\
&=\frac{1}{4}\left[f(\widetilde{P}_2)+f(\widetilde{P}_3^r)+f(\widetilde{P}_3)+f(\widetilde{P}_2^r)\right],
\eqn
and
\beqn
\Theta_{\texttt{id}}\circ f(\widetilde{P}_3)&:=\frac{1}{8}\sum_{\sigma \in \mathcal{G}_{12|34}}\chi_{\texttt{id}}(\sigma)f(\sigma \widetilde{P}_3)\\\nonumber
&=\frac{1}{8}\left[f(\widetilde{P}_3)+f(\widetilde{P}_2^r)+f(\widetilde{P}_2)+f(\widetilde{P}_3^r)+f(\widetilde{P}_3^r)+f(\widetilde{P}_3)+f(\widetilde{P}_2)+f(\widetilde{P}_2^r)\right]\nonumber\\
&=\frac{1}{4}\left[f(\widetilde{P}_3)+f(\widetilde{P}_2^r)+f(\widetilde{P}_2)+f(\widetilde{P}_3^r)\right].
\eqn
We see that this representation is not tree-informative.

For the $\texttt{sgn}$ representation we have
\beqn
\Theta_{\texttt{sgn}}\circ f(P_1)&:=\frac{1}{8}\sum_{\sigma \in \mathcal{G}_{12|34}}\chi_{\texttt{sgn}}(\sigma)f(\sigma \widetilde{P}_1)\\
&=\frac{1}{8}\left[f(\widetilde{P}_1)-f(\widetilde{P}_1)-f(\widetilde{P}_1)+f(\widetilde{P}_1)+f(\widetilde{P}_1^r)+f(\widetilde{P}_1^r)-f(\widetilde{P}_1^r)-f(\widetilde{P}_1^r)\right]\\
&=0,\nonumber
\eqn
\beqn
\Theta_{\texttt{sgn}}\circ f(\widetilde{P}_2)&:=\frac{1}{8}\sum_{\sigma \in \mathcal{G}_{12|34}}\chi_{\texttt{sgn}}(\sigma)f(\sigma \widetilde{P}_2)\\
&=\frac{1}{8}\left[f(\widetilde{P}_2)-f(\widetilde{P}_3^r)-f(\widetilde{P}_3)+f(\widetilde{P}_2^r)+f(\widetilde{P}_2)+f(\widetilde{P}_2^r)-f(\widetilde{P}_3^r)-f(\widetilde{P}_3)\right]\\
&=\frac{1}{4}\left[f(\widetilde{P}_2)+f(\widetilde{P}_2^r)-f(\widetilde{P}_3)-f(\widetilde{P}_3^r)\right],\nonumber
\eqn
and
\beqn
\Theta_{\texttt{sgn}}\circ f(\widetilde{P}_3)&:=\frac{1}{8}\sum_{\sigma \in \mathcal{G}_{12|34}}\chi_{\texttt{sgn}}(\sigma)f(\sigma \widetilde{P}_3)\\
&=\frac{1}{8}\left[f(\widetilde{P}_3)-f(\widetilde{P}_2^r)-f(\widetilde{P}_2)+f(\widetilde{P}_3^r)+f(\widetilde{P}_3^r)+f(\widetilde{P}_3)-f(\widetilde{P}_2)-f(\widetilde{P}_2^r)\right]\\
&=\frac{1}{4}\left[f(\widetilde{P}_3)+f(\widetilde{P}_3^r)-f(\widetilde{P}_2)-f(\widetilde{P}_2^r)\right].\nonumber
\eqn
Thus in this case we have $\Theta_{\texttt{sgn}}\circ f(\widetilde{P}_1)=0$ and $\Theta_{\texttt{sgn}}\circ f(\widetilde{P}_2)=-\Theta_{\texttt{sgn}}\circ f(\widetilde{P}_3)$, so that this representation is tree-informative.
A major outcome of this article is that these are exactly the relations that were derived in \citet{sumner2008} by explicit computation.

For the $d_1$ representation we have
\beqn
\Theta_{d_1}\circ f(\widetilde{P}_1)&:=\frac{1}{8}\sum_{\sigma \in \mathcal{G}_{12|34}}\chi_{d_1}(\sigma)f(\sigma \widetilde{P}_1),\\
&=\frac{1}{8}\left[f(\widetilde{P}_1)-f(\widetilde{P}_1)-f(\widetilde{P}_1)+f(\widetilde{P}_1)-f(\widetilde{P}_1^r)-f(\widetilde{P}_1^r)+f(\widetilde{P}_1^r)+f(\widetilde{P}_1^r)\right]\\
&=0,\nonumber
\eqn
\beqn
\Theta_{d_1}\circ f(\widetilde{P}_2)&:=\frac{1}{8}\sum_{\sigma \in \mathcal{G}_{12|34}}\chi_{d_1}(\sigma)f(\sigma \widetilde{P}_2),\\
&=\frac{1}{8}\left[f(\widetilde{P}_2)-f(\widetilde{P}_3^r)-f(\widetilde{P}_3)+f(\widetilde{P}_2^r)-f(\widetilde{P}_2)-f(\widetilde{P}_2^r)+f(\widetilde{P}_3^r)+f(\widetilde{P}_3)\right]\nonumber\\
&=0
\eqn
and
\beqn
\Theta_{d_1}\circ f(\widetilde{P}_3)&:=\frac{1}{8}\sum_{\sigma \in \mathcal{G}_{12|34}}\chi_{d_1}(\sigma)f(\sigma \widetilde{P}_3),\\
&=\frac{1}{8}\left[f(\widetilde{P}_3)-f(\widetilde{P}_2^r)-f(\widetilde{P}_2)+f(\widetilde{P}_3^r)-f(\widetilde{P}_3^r)-f(\widetilde{P}_3)+f(\widetilde{P}_2)+f(\widetilde{P}_2^r)\right]\nonumber\\
&=0.
\eqn
We see that this representation vanishes on \emph{every} quartet. 

For the $d_2$ representation we have 
\beqn
\Theta_{d_2}\circ f(\widetilde{P}_1)&:=\frac{1}{8}\sum_{\sigma \in \mathcal{G}_{12|34}}\chi_{d_2}(\sigma)f(\sigma \widetilde{P}_1),\\
&=\frac{1}{8}\left[f(\widetilde{P}_1)+f(\widetilde{P}_1)+f(\widetilde{P}_1)+f(\widetilde{P}_1)-f(\widetilde{P}_1^r)-f(\widetilde{P}_1^r)-f(\widetilde{P}_1^r)-f(\widetilde{P}_1^r)\right]\\
&=\fra{1}{2}\left[f(\widetilde{P}_1)-f(\widetilde{P}_1^r)\right],\nonumber
\eqn
\beqn
\Theta_{d_2}\circ f(\widetilde{P}_2)&:=\frac{1}{8}\sum_{\sigma \in \mathcal{G}_{12|34}}\chi_{d_2}(\sigma)f(\sigma \widetilde{P}_2),\\
&=\frac{1}{8}\left[f(\widetilde{P}_2)+f(\widetilde{P}_3^r)+f(\widetilde{P}_3)+f(\widetilde{P}_2^r)-f(\widetilde{P}_2)-f(\widetilde{P}_2^r)-f(\widetilde{P}_3^r)-f(\widetilde{P}_3)\right]\nonumber\\
&=0
\eqn
and
\beqn
\Theta_{d_2}\circ f(\widetilde{P}_3)&:=\frac{1}{8}\sum_{\sigma \in \mathcal{G}_{12|34}}\chi_{d_2}(\sigma)f(\sigma \widetilde{P}_3),\\
&=\frac{1}{8}\left[f(\widetilde{P}_3)+f(\widetilde{P}_2^r)+f(\widetilde{P}_2)+f(\widetilde{P}_3^r)-f(\widetilde{P}_3^r)-f(\widetilde{P}_3)-f(\widetilde{P}_2)-f(\widetilde{P}_2^r)\right]\nonumber\\
&=0.
\eqn
This representation vanishes identically on the quartets $13|24$ and $14|23$ but not on $12|34$.
 
As the $C$ representation is 2-dimensional we consider a tuple $f:=(f_1,f_2)\mapsto \Theta_{C}\circ f$ with $f_1,f_2\in F$:
\beqn
\Theta_{C}\circ f(\widetilde{P}_1)&:=\frac{1}{8}\sum_{\sigma \in \mathcal{G}_{12|34}}\chi_{C}(\sigma)f(\sigma \widetilde{P}_1) = \frac{1}{8}\left[2f(\widetilde{P}_1)-2f(\widetilde{P}_1)\right] =0,\nonumber\\
\Theta_{C}\circ f(\widetilde{P}_2)&:=\frac{1}{8}\sum_{\sigma \in \mathcal{G}_{12|34}}\chi_{C}(\sigma)f(\sigma \widetilde{P}_2) =\frac{1}{8}\left[2f(\widetilde{P}_2)-2f(\widetilde{P}_2^r)\right] =\frac{1}{4}\left[f(\widetilde{P}_2)-f(\widetilde{P}_2^r)\right],\nonumber
\eqn
and
\beqn
\Theta_{C}\circ f(\widetilde{P}_3)&:=\frac{1}{8}\sum_{\sigma \in \mathcal{G}_{12|34}}\chi_{C}(\sigma)f(\sigma \widetilde{P}_3) =\frac{1}{8}\left[2f(\widetilde{P}_3)-2f(\widetilde{P}_3^r)\right] =\frac{1}{4}\left[f(\widetilde{P}_3)-f(\widetilde{P}_3^r)\right].\nonumber
\eqn
In this case the representation vanishes identically on the quartet $12|34$ but not on the other two quartets, and is hence tree-informative.

It is worth noting that the above relations are generic statements about invariants that belong to particular irreducible modules of $\mathcal{G}_{12|34}$ and it is still possible for there to be additional tree-informative relations.
For example, in the $\texttt{id}$ case it is clear that an invariant could be tree-informative if it so happened that $f(\widetilde{P}_1)+f(\widetilde{P}_1^r)=0$.

It seems that the tree-informative Markov invariants identified in \citet{sumner2008} transform under the $\texttt{sgn}$ representation of $\mathcal{G}_{12|34}$.
Unfortunately, our understanding of the Schur-Weyl duality does not allow us to take the final step and directly write $F$ as a sum of irreducible modules of $\mathfrak{S}_4$. 
This is because the details of the $\mathfrak{S}_4$ symmetry seems to get lost in the derivation of the existence conditions given in \citet{sumner2008}.
However, in the next section will give a procedure that generates invariants in $F$ that have clear transformation properties under $\mathfrak{S}_4$.
As it will be clear which modules these invariants belong to, we need only give a linearly independent set of four invariants to infer the decomposition of $F$ into irreducible modules of $\mathfrak{S}_4$, and whence of $\mathcal{G}_{12|34}$ using the group branching rules (\ref{eq:branchingrules}).

Before we do this however, we return to the invariants of \citet{evans1993} and explicitly show how they occur as irreducible modules of $\mathcal{G}_{12|34}$.
As an illustration of the power of the present approach, we can even do this without delving into the precise meaning of the formal expressions they give for their invariants.

In Section 7 \citet{evans1993} give phylogenetic invariants for the Kimura 3ST model on the quartet tree $12|34$ in three forms 
%\begin(enumerate)
%\item %$z^{(a)}:=\mathbb{E}\left[\langle Y_1+Y_2,\chi \right\rangle\langle Y_3+Y_4,\chi' \right\rangle\right]-\mathbb{E}\left[\langle Y_1+Y_2,\chi \right\rangle]\mathbb{E}\left[\langle Y_3+Y_4,\chi' \right\rangle\right]$
%\item %$z^{(b)}:=$
%\end{enumerate}
\beqn
z^{(a)}(\chi,\chi')&:=\mathbb{E}\left[\left\langle Y_1+Y_2,\chi \right\rangle\left\langle Y_3+Y_4,\chi' \right\rangle\right]-\mathbb{E}\left[\left\langle Y_1+Y_2,\chi \right\rangle\right]\mathbb{E}\left[\left\langle Y_3+Y_4,\chi' \right\rangle\right],\nonumber\\ z^{(b)}(\chi,\chi')&:=\mathbb{E}\left\langle\sum_{i=1}^4Y_{i},\chi\right\rangle\mathbb{E}\left\langle\sum_{i=1}^4Y_{i},\chi'\right\rangle\\
&\hspace{8em}-\mathbb{E}\left[\left\langle Y_1+Y_2,\chi \right\rangle\left\langle Y_3+Y_4,\chi' \right\rangle\right]\mathbb{E}\left[\left\langle Y_1+Y_2,\chi' \right\rangle\left\langle Y_3+Y_4,\chi \right\rangle\right],\\
z^{(c)}(\chi,\chi')&:=\mathbb{E}\left[\left\langle Y_1+Y_3,\chi \right\rangle\left\langle Y_2+Y_4,\chi' \right\rangle\right]\mathbb{E}\left[\left\langle Y_1+Y_2,\chi' \right\rangle\left\langle Y_3+Y_4,\chi \right\rangle\right]\\
&\hspace{9em}-\mathbb{E}\left[\left\langle Y_1+Y_4,\chi \right\rangle\left\langle Y_2+Y_3,\chi' \right\rangle\right]\mathbb{E}\left[\left\langle Y_1+Y_4,\chi' \right\rangle\left\langle Y_2+Y_3,\chi \right\rangle\right],
\eqn
where $\mathbb{E}$ denotes expectation, $Y_1,Y_2,Y_3,Y_4$ are the random variables at the leaves of the quartet taking on values in the abelian group $\mathbb{Z}_2\times \mathbb{Z}_2$ and $\chi$,$\chi'$ are \emph{non-trivial} characters of $\mathbb{Z}_2\times \mathbb{Z}_2$.

Now it is clear by inspection that $z^{(b)}$ transforms as the $\texttt{id}$ representation of $\mathcal{G}_{12|34}$ and $z^{(c)}$ transforms as the $\texttt{sgn}$ representation.
For $z^{(a)}$ we observe that $z^{(a)}(\chi,\chi)$ transforms as the $\texttt{id}$ representation, as does the symmetric combination $z^{(a)}(\chi,\chi')+z^{(a)}(\chi',\chi)$.
Finally, by inspecting Table~\ref{tab:chartab} we see that the anti-symmetric combination $z^{(a)}(\chi,\chi')-z^{(a)}(\chi',\chi)$ transforms as the $d_1$ representation.
In this way we have completely characterized these quartet invariants into irreducible modules of $\mathcal{G}_{12|34}$. 
\section{Explicit forms}
In this section we present a trick that freely generates Markov invariants, and we apply the previous theory to identify which $\mathcal{G}_{12|34}$-module these invariants belong to. 
We conclude by identifying $F$ as a sum of irreducible $\mathcal{G}_{12|34}$-modules.

We begin by observing that the (completely antisymmetric) Levi-Citiva tensor
\beqn 
\epsilon_{i_1i_2i_3i_4}:= \texttt{sgn}(i_1i_2i_3i_4)\nonumber
\eqn 
transforms as the $\texttt{sgn}$ representation of the general linear group $GL(\mathbb{C}^4)$.
That is, for any $g\in GL(\mathbb{C}^4)$,  
\beqn
\sum_{1\leq j_1,j_2j_3,j_4\leq 4}g_{i_1j_1}g_{i_2j_2}g_{i_3j_3}g_{i_4j_4}\epsilon_{j_1j_2j_3j_4}=\det(g) \epsilon_{i_1i_2i_3i_4}.\nonumber
\eqn
Now, in a procedure that is consistent with that given in \citet{sumner2008} (we only ignore symmetrization across the rows of associated tableaux), we can freely construct Markov invariants such as
\beqn
f(\psi)=\sum\psi_{\Sigma\Sigma i_3i_4}\psi_{j_1j_2\Sigma\Sigma}\psi_{k_1k_2k_3k_4}\psi_{l_1l_2l_3l_4}\psi_{m_1m_2m_3m_4}\epsilon_{j_1k_1l_1m_1}\epsilon_{j_2k_2l_2m_2}\epsilon_{i_3k_3l_3m_3}\epsilon_{i_4k_4l_4m_4},\nonumber
\eqn
where each subscript ``$\Sigma$'' can be thought as either a sum over states (as with \citet{allman2003}) or the ``0'' component of the basis specified in \citet{sumner2008}, and all remaining indices are summed from 1 to $k$.
One can readily check that if 
\beqn
\psi_{i_1i_2i_3i_4}\rightarrow \psi'_{i_1i_2i_3i_4}=\sum_{1\leq j_1,j_2,j_3,j_4\leq k}m^{(1)}_{i_1j_1}m^{(2)}_{i_2j_2}m^{(3)}_{i_3j_3}m^{(4)}_{i_4j_4}\psi_{j_1j_2j_3j_4},\nonumber
\eqn 
with each $m^{(a)}_{ij}$ a Markov matrix such that $\sum_{i}m^{(a)}_{ij}=1$, that 
\beqn
f(\psi')=\det(M_1)\det(M_2)\det(M_3)\det(M_4)f(\psi),\nonumber
\eqn
as required.
Note that this construction requires that the $\Sigma$'s are spread evenly across the legs of the tensors (one for each part of the tensor product).

It is worth observing that this presentation can be related to that given by \citet{allman2003} by observing that the cofactor matrix can be expressed as 
\beqn
\left[\text{cof}(M)\right]_{ab}=\sum_{1\leq i_1,i_2,j_1,j_2,k_1,k_2\leq k}m_{i_1i_2}m_{j_1j_2}m_{k_1k_2}\epsilon_{i_1j_1k_1a}\epsilon_{i_2j_2k_2b}.\nonumber
\eqn 
However, in that work the phylogenetic invariants constructed were not required to have any particular transformation properties under the action of the Markov semigroup. 
It would also be of interest to determine the transformation properties of the invariants given in \citet{allman2003} under the relevant isotropy subgroup.

In the general case of $k$ states, the Levi-Citiva tensor has $k$ legs, thus the minimum degree we can construct an invariant as above is $d\!=\!k$.
However, by anti-symmetry this only works for even $m$, and we can construct a single $d\!=\!k$ Markov invariant for each even $m$.
This is consistent with \citet{sumner2008} where it was observed that there exist Markov invariants of degree $d\!=\!k$ for even $m$ only.
For $m\!=\!2$ the corresponding Markov invariant forms the foundation of the Log-Det distance estimator, and $m\!=\!4$ the Markov invariant is referred to as the ``quangle''.

Taking the quartet case $m\!=\!4$ and $d\!=\!k+1$, we must insert a total of four $\Sigma$'s into the expression for the Markov invariant (one for each leg of the tensor product).
If we represent the five factors in the expression as boxes $I$, $J$, $K$, $L$ and $M$, we are asking how many ways are there to put four objects $\{1,2,3,4\}$ into 5 identical boxes.
Clearly, for each set partition of $\{1,2,3,4\}$ this can be done in the various ways given in Table~\ref{tab:setpart}.
For example, we have 
\beqn
f^{(12,34)}(\psi)=\sum\psi_{\Sigma\Sigma i_3i_4}\psi_{j_1j_2\Sigma\Sigma}\psi_{k_1k_2k_3k_4}&\psi_{l_1l_2l_3l_4}\psi_{m_1m_2m_3m_4}\\
&\cdot\epsilon_{j_1k_1l_1m_1}\epsilon_{j_2k_2l_2m_2}\epsilon_{i_3k_3l_3m_3}\epsilon_{i_4k_4l_4m_4},\nonumber
\eqn
and
\beqn
f^{(12,3,4)}(\psi)=\sum \psi_{\Sigma\Sigma i_3i_4}\psi_{j_1j_2\Sigma j_4}\psi_{k_1k_2k_3\Sigma}&\psi_{l_1l_2l_3l_4}\psi_{m_1m_2m_3m_4}\\
&\cdot\epsilon_{j_1k_1l_1m_1}\epsilon_{j_2k_2l_2m_2}\epsilon_{i_3k_3l_3m_3}\epsilon_{i_4j_4l_4m_4}.\nonumber
\eqn
Now given that the rows in each set partition can be interchanged freely, it is easy to check that under $\mathfrak{S}_4$ these invariants transform amongst each other following the permutations, e.g. $\sigma\cdot(ijk,l)=(\sigma(i)\sigma(j)\sigma(k),\sigma(l))$.
In fact one can explicitly check that 
\beqn
(124)\circ f^{(12,34)}=f^{(24,13)}=f^{(13,24)}.\nonumber
\eqn
Thus for each set partition, the corresponding invariants form an invariant subspace of $\mathfrak{S}_4$.
As is depicted in Table~\ref{tab:setpart}, we label these invariant subspaces by enclosing the partition shape within square brackets $\left[\cdot\right]$.
That is,
\beqn
\left[2^2\right]:=\langle f^{(12,34)},f^{(13,24)},f^{(14,23)}\rangle,\nonumber
\eqn
where $\langle \cdot,\ldots,\cdot \rangle$ denotes linear span.

It is too much to hope that for each set partition that the corresponding invariant subspace will be irreducible, but using the primitive idempotents of $\mathfrak{S}_4$ it is a straightforward pencil and paper computation to show that for the $\left[4\right]$ module we have
\beqn
\Theta_{\texttt{id}}\circ f^{(1234)}&=f^{(1234)},\nonumber\\
\Theta_{(31)}\circ f^{(1234)}&=\Theta_{(2^2)}\circ f^{(1234)}=\Theta_{(21^2)}\circ f^{(1234)}=\Theta_{\texttt{sgn}}\circ f^{(1234)}=0.
\eqn
For the $\left[31\right]$ module we note the $\mathfrak{S}_4$ symmetry so we need only consider the canonical example
\beqn
\Theta_{\texttt{id}}\circ f^{(123,4)}&=\fra{1}{4}\left(f^{(123,4)}+f^{(124,3)}+f^{(134,2)}+f^{(234,1)}\right),\nonumber\\
\Theta_{(31)}\circ f^{(123,4)}&=\fra{1}{24}\left(3f^{(123,4)}-f^{(124,3)}-f^{(134,2)}-f^{(234,1)}\right),\\
\Theta_{(2^2)}\circ f^{(123,4)}&=\Theta_{(21^2)}\circ f^{(123,4)}=\Theta_{\texttt{sgn}}\circ f^{(123,4)}=0,
\eqn
with obvious similar relations for $f^{(124,3)},f^{(134,2)}$ and $f^{(234,1)}$.
For the $\left[2^2\right]$ module we can again exploit the $\mathfrak{S}_4$ symmetry and consider
\beqn
\Theta_{\texttt{id}}\circ f^{(12,34)}&=\fra{1}{3}\left(f^{(12,34)}+f^{(13,24)}+f^{(14,23)} \right),\nonumber\\
\Theta_{(2^2)}\circ f^{(12,34)}&=\fra{1}{6}\left(2f^{(12,34)}-f^{(13,24)}-f^{(14,23)} \right),\\
\Theta_{(31)}\circ f^{(12,34)}&=\Theta_{(21^2)}\circ f^{(12,34)}=\Theta_{\texttt{sgn}}\circ f^{(12,34)}=0,
\eqn
with obvious similar relations for $f^{(13,24)}$ and $f^{(14,23)}$.
Similarly, for the $\left[21^2\right]$ module we have
\beqn
\Theta_{\texttt{id}}\circ f^{(12,3,4)}&=\fra{1}{6}\left(f^{(12,3,4)}+f^{(13,2,4)}+f^{(14,2,3)}+f^{(23,1,4)}+f^{(24,1,3)}+f^{(34,1,2)} \right),\nonumber\\
\Theta_{(31)}\circ f^{(12,3,4)}&=\fra{1}{6}\left(f^{(12,3,4)}-f^{(34,1,2)}\right),\\
\Theta_{(2^2)}\circ f^{(12,3,4)}&=\fra{1}{12}\left(2(f^{(12,3,4)}+f^{(34,1,2)})-(f^{(13,2,4)}+f^{(24,1,3)}+f^{(14,2,3)}+f^{(23,1,4)}) \right),\\
\Theta_{(21^2)}\circ f^{(12,3,4)}&=\Theta_{\texttt{sgn}}\circ f^{(12,3,4)}=0.
\eqn
Finally, for the $\left[1^4\right]$ module:
\beqn
\Theta_{\texttt{id}}\circ f^{(1,2,3,4)}&=f^{(1,2,3,4)},\nonumber\\
\Theta_{(31)}\circ f^{(1,2,3,4)}&=\Theta_{(2^2)}f^{(1,2,3,4)}=\Theta_{(21^2)}\circ f^{(1,2,3,4)}=\Theta_{\texttt{sgn}}\circ f^{(1,2,3,4)}=0.\\
\eqn

Thus as irreducible modules of $\mathfrak{S}_4$, we have
\beqn
\left[4\right] &\cong \texttt{id},\nonumber\\
[31] &\cong \texttt{id}\oplus (31),\\
[2^2] &\cong \texttt{id}\oplus (2^2),\\
[21^2] &\cong \texttt{id}\oplus (2^2)\oplus (31),\\
[1^4] &\cong \texttt{id}.
\eqn
It is also worth noting that the dimensions of these modules add up the the number of invariants given in Table~\ref{tab:setpart}. 

\begin{table}
\centering
\begin{tabular}[h]{|l|llllll|}
\hline
I & 1234 & & & & & \\
\hline
I & 123 & 124 & 134 & 234 & &  \\
J &	4 & 3 & 2 & & &  \\
\hline
I & 12 & 13 & 14 & & & \\
J &	34 & 24 & 23 & & & \\
\hline
I & 12 & 13 & 14 & 23 & 24 & 34 \\
J &	3 & 2 & 2 & 1 & 1 & 1 \\
K & 4 & 4 & 3 & 4 & 3 & 2 \\
\hline
I & 1 & & & & & \\
J & 2	& & & & & \\
K & 3 & & & & & \\
L & 4 & & & & & \\
\hline
\end{tabular}
\caption{Classes of invariants: $\left[4\right]$, $\left[31\right]$, $\left[2^2\right]$, $\left[21^2\right]$ and $\left[1^4\right]$.}
\label{tab:setpart}
\end{table}

However, we know that $F$ is only 4 dimensional, so we have far too many invariants. 
To help rectify this, we note that
\beqn
f^{(1234)}(\psi)=\sum\psi_{\Sigma\Sigma\Sigma\Sigma}\psi_{j_1j_2j_3j_4}\psi_{k_1k_2k_3k_4}\psi_{l_1l_2l_3l_4}\psi_{m_1m_2m_3m_4}\epsilon_{j_1k_1l_1m_1}\ldots \epsilon_{j_4k_4l_4m_4},\nonumber
\eqn
which can be factorised into a degree $d\!=\!4$ invariant multiplied by the ``trivial'' invariant $\Phi(\psi):=\psi_{\Sigma\Sigma\Sigma\Sigma}$. 
Thus, $\left[4\right]\in \Phi\cdot \mathcal{P}_{4}(V^{\otimes 4})^{\times^4 GL(V)}$, so we can conclude that
\beqn
F=\left[4\right]\oplus \bar{F},\nonumber
\eqn
with $\dim(\bar{F})=3$.

At this point we throw our hands in the air and resort to explicit computation with \texttt{R} \citep{Rproject} (code available upon request) to show that
\beqn
\left[4\right]&\cong \left[31\right]_{\texttt{id}},\nonumber\\
\left[21^2\right]_{\texttt{id}}&\cong \left[1^4\right],\\
\left[2^2\right]_{\texttt{id}}&\in \left\langle\left[4\right],\left[1^4\right]\right\rangle,\\
\left[2^2\right]_{(2^2)}&\cong \left[21^2\right]_{(2^2)},\\
\left[31\right]_{(31)}&\equiv 0,\\
\left[21^2\right]_{(31)}&\equiv 0,
\eqn
where $\left[\cdot\right]_{(\cdot)}$ denotes the $(\cdot)$ $\mathfrak{S}_{4}$-module contained in $\left[\cdot\right]$.
From this we can conclude that
\beqn
F=\left[4\right]\oplus \left[1^4\right]\oplus \left[2^2\right]_{(2^2)}.\nonumber
\eqn
So that, as a decomposition into irreducible representations of $\mathfrak{S}_4$, we have
\beqn
F=2\cdot \text{\texttt{id}}\oplus (2^2).\nonumber
\eqn

Referring to the branching rule $\mathfrak{S}_4\downarrow \mathcal{G}_{12|34}$, as a decomposition into irreducible modules of $\mathcal{G}_{12|34}$ we see that
\beqn
F=3\cdot \texttt{id}\oplus \texttt{sgn}.\nonumber 
\eqn
Thus we have achieved our main aim of expressing $F$ as a direct sum of irreducible modules of $\mathfrak{S}_4$ and $\mathcal{G}_{12|34}$.

By decomposing $F$ into a direct sum of irreducible modules of $\mathcal{G}_{12|34}$ we have shown that there is a single copy of the $\texttt{sgn}$ representation and hence a single tree-informative Markov invariant for the quartet $\mathcal{T}_1\!:=12|34$.
Using the primitive idempotent of the $(2^2)$ representation of $\mathfrak{S}_4$ we have
\beqn
\Theta_{(2^2)}\circ f^{(13,24)}&=\fra{1}{6}\left(2f^{(13,24)}-f^{(14,23)}-f^{(12,34)} \right).\nonumber
\eqn
Now projecting further with the $\texttt{sgn}$ representation of $\mathcal{G}_{12|34}$ we get
\beqn
\Theta_{\texttt{sgn}}\circ\fra{1}{6}\left(2f^{(13,24)}-f^{(13,23)}-f^{(12,34)} \right)=\fra{1}{2}\left(f^{(13,24)}-f^{(14,23)}\right),\nonumber
\eqn
and we define
\beqn
Q_1:=\fra{1}{2}\left(f^{(13,24)}-f^{(14,23)}\right).\nonumber
\eqn
We can use the action $(14)\cdot \mathcal{T}_1\mapsto \mathcal{T}_2$ to transform this invariant to produce a tree-informative invariant for $\mathcal{T}_2$:
\beqn
Q_2:=\fra{1}{2}\left(f^{(14,23)}-f^{(12,34)}\right),\nonumber
\eqn
and similarly to produce a tree informative invariant for $\mathcal{T}_3$:
\beqn
Q_3:=\fra{1}{2}\left(f^{(12,34)}-f^{(13,24)}\right).\nonumber
\eqn
These are none other than the Markov invariants referred to as the ``squangles'' (\textbf{s}tochastic quangles) in \citet{sumner2008}.

Similar considerations reveal that the three Markov invariants that transform as the $\texttt{id}$ representation of $\mathcal{G}_{12|34}$ are $f^{(1234)}$, $f^{(1,2,3,4)}$, and $f^{(12,34)}$.
We summarize all of this in the following theorem.

\begin{thm}
The set of Markov invariants for quartet trees 
\beqn
F:=\left\{f\in \mathcal{P}_5(V^{\otimes 4})\,|\,g^{-1}\circ f=\det(g)f \right\},\nonumber
\eqn
where $g=M_1\otimes M_2\otimes M_3\otimes M_4$ and each $M_i$ is an element of the Markov semigroup, can be decomposed into irreducible modules of $\mathfrak{S}_4$ as
\beqn
F&=2\cdot \texttt{id}\oplus \left(2^2\right)\\
&=\left\langle f^{(1234)}\right\rangle\oplus \left\langle f^{(1,2,3,4)}\right\rangle\oplus \left\langle 2f^{(12,34)}-f^{(13,24)}-f^{(14,23)},2f^{(13,24)}-f^{(12,34)}-f^{(14,23)}\right\rangle,\nonumber
\eqn
and irreducible modules of $\mathcal{G}_{12|34}$ as
\beqn
F&=3\cdot \texttt{id}\oplus \texttt{sgn}\\
&=\left\langle f^{(1234)}\right\rangle\oplus \left\langle f^{(1,2,3,4)}\right\rangle\oplus \left\langle f^{(12,34)}\right\rangle\oplus \left\langle f^{(13,24)}-f^{(14,23)} \right\rangle.\nonumber
\eqn
\end{thm}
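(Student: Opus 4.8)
The plan is to synthesise the machinery assembled in the preceding sections: the existence count $\dim(F)=4$ from \citet{sumner2008}, the Levi-Civita construction of degree-$5$ Markov invariants, the action of $\mathfrak{S}_4$ on the set-partition-indexed families of these invariants, the primitive-idempotent decompositions of the subspaces $[4]$, $[31]$, $[2^2]$, $[21^2]$, $[1^4]$, and the branching rule $\mathfrak{S}_4\downarrow\mathcal{G}_{12|34}$ recorded in (\ref{eq:branchingrules}).

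First I would note that the construction of the previous section attaches to each set partition of $\{1,2,3,4\}$ a family of degree-$5$ Markov invariants $f^{(\,\cdot\,)}$, and that a leaf permutation $\sigma\in\mathfrak{S}_4$ sends $f^{(S)}$ to $f^{(\sigma S)}$; hence each partition \emph{shape} spans an $\mathfrak{S}_4$-invariant subspace, namely $[4]$, $[31]$, $[2^2]$, $[21^2]$, $[1^4]$. Applying the primitive idempotents $\Theta_\chi$ of $\mathfrak{S}_4$ to a canonical generator of each — a finite pencil-and-paper computation with the character table of $\mathfrak{S}_4$ (Table~\ref{tab:chartabS4}) — yields $[4]\cong\texttt{id}$, $[31]\cong\texttt{id}\oplus(31)$, $[2^2]\cong\texttt{id}\oplus(2^2)$, $[21^2]\cong\texttt{id}\oplus(2^2)\oplus(31)$ and $[1^4]\cong\texttt{id}$, as displayed before the theorem.

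Since the combined dimension of these subspaces far exceeds $\dim(F)=4$, there must be many linear dependences among the constructed invariants, so the remaining work is to cut the list down. The factorisation of $f^{(1234)}$ into $\Phi$ times a degree-$4$ Markov invariant shows $[4]\subseteq\Phi\cdot\mathcal{P}_{4}(V^{\otimes 4})^{\times^4 GL(V)}$ and peels off one $\texttt{id}$ summand, so $F=[4]\oplus\bar F$ with $\dim(\bar F)=3$. One then determines which of the remaining irreducible pieces $[1^4]$, $[2^2]_{\texttt{id}}$, $[2^2]_{(2^2)}$, $[21^2]_{\texttt{id}}$, $[21^2]_{(2^2)}$, $[31]_{(31)}$, $[21^2]_{(31)}$ survive: by Schur's lemma, summands of distinct irreducible type are independent unless one vanishes, and two copies of the same type either coincide up to scalar or are independent. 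Checking the precise identities $[4]\cong[31]_{\texttt{id}}$, $[21^2]_{\texttt{id}}\cong[1^4]$, $[2^2]_{\texttt{id}}\in\langle[4],[1^4]\rangle$, $[2^2]_{(2^2)}\cong[21^2]_{(2^2)}$, $[31]_{(31)}\equiv 0$ and $[21^2]_{(31)}\equiv 0$ is where the genuine work lies, and I expect this to be the main obstacle: representation theory by itself does not force these polynomial identities, so one must evaluate the invariants explicitly — for us, numerically with \texttt{R} on generic tensors. Granting them, the span of all constructed invariants collapses to $[4]\oplus[1^4]\oplus[2^2]_{(2^2)}$, a subspace of $F$ of dimension $1+1+2=4=\dim(F)$ and therefore all of $F$; this is the first displayed decomposition $F\cong 2\cdot\texttt{id}\oplus(2^2)$ as an $\mathfrak{S}_4$-module, together with the stated spanning set $\langle f^{(1234)}\rangle\oplus\langle f^{(1,2,3,4)}\rangle\oplus\langle 2f^{(12,34)}-f^{(13,24)}-f^{(14,23)},\,2f^{(13,24)}-f^{(12,34)}-f^{(14,23)}\rangle$.

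Finally I would push this through the branching rule (\ref{eq:branchingrules}): $\texttt{id}\to\texttt{id}$ and $(2^2)\to\texttt{id}\oplus\texttt{sgn}$, so $F\cong 3\cdot\texttt{id}\oplus\texttt{sgn}$ as a $\mathcal{G}_{12|34}$-module. For generators, projecting into the $\texttt{sgn}$ isotype via $\Theta_{\texttt{sgn}}$ applied to $\Theta_{(2^2)}\circ f^{(13,24)}$, and using Table~\ref{tab:chartab} with the action recorded in Table~\ref{tab:actions}, returns $\frac{1}{2}(f^{(13,24)}-f^{(14,23)})$, the squangle $Q_1$; the three $\texttt{id}$ summands are represented by $f^{(1234)}$, $f^{(1,2,3,4)}$ and $f^{(12,34)}$, each seen to be $\mathcal{G}_{12|34}$-fixed directly from Table~\ref{tab:actions}. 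Linear independence of these four invariants is again read off the explicit evaluation already performed, completing the proof.
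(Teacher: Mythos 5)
Your proposal follows the paper's own route essentially step for step: the Levi--Civita construction of the set-partition-indexed invariants $f^{(\cdot)}$, the primitive-idempotent decomposition of each shape $[4],[31],[2^2],[21^2],[1^4]$ into $\mathfrak{S}_4$-irreducibles, the factorisation of $f^{(1234)}$ to peel off one $\texttt{id}$, the appeal to explicit (computer) evaluation to establish the linear identities that collapse the list to $[4]\oplus[1^4]\oplus[2^2]_{(2^2)}$, and finally the branching rule $\mathfrak{S}_4\downarrow\mathcal{G}_{12|34}$ with the $\Theta_{\texttt{sgn}}\circ\Theta_{(2^2)}$ projection producing the squangle generators. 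You correctly identify, as the paper does, that the genuine content lies in the explicitly computed polynomial identities rather than in the representation theory, so the proposal is sound and matches the paper's argument.
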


\subsection*{}
As a final loose end, we note that a crucial aspect to the performance of the Markov invariants in the simulation study given in \citet{sumner2008} was the observation that
\beqn
Q_{1}(P_2)\geq 0, \nonumber
\eqn
with similar relations for the other invariants.
Now we have explicit forms for the invariants we can easily derive the relevant relations.
Consider, consistent with $\mathcal{T}_2$, the ``trimmed'' phylogenetic tensor $P$ with components $p_{i_1i_2i_3i_4}=\delta_{i_1i_3}\delta_{i_2i_4}\psi_{i_1i_2}$ where $\psi_{i_1i_2}=\pi_{i_1}m^{(0)}_{i_1i_2}$.
Now
\beqn
f^{(13,24)}(P)&=p_{\Sigma i_2\Sigma i_4}p_{j_1\Sigma j_3\Sigma}p_{k_1k_2k_3k_4}p_{l_1l_2l_3l_4}p_{m_1m_2m_3m_4}\epsilon_{j_1k_1l_1m_1}\epsilon_{i_2k_2l_2m_2}\epsilon_{j_3k_3l_3m_3}\epsilon_{i_4k_4l_4m_4}\\\nonumber
&=\psi_{\Sigma i_2}\delta_{i_2i_4}\psi_{j_1\Sigma}\delta_{j_1j_3}\psi_{k_1k_2}\delta_{k_1k_3}\delta_{k_2k_4}\psi_{l_1l_2}\delta_{l_1l_3}\delta_{l_2l_4}\psi_{m_1m_2}\delta_{m_1m_3}\delta_{m_2m_4}\\
&\hspace{5em}\cdot\epsilon_{j_1k_1l_1m_1}\epsilon_{i_2k_2l_2m_2}\epsilon_{j_3k_3l_3m_3}\epsilon_{i_4k_4l_4},\\
&=\psi_{\Sigma i_2}\psi_{j_1\Sigma}\psi_{k_1k_2}\psi_{l_1l_2}\psi_{m_1m_2}|\epsilon_{j_1k_1l_1m_1}||\epsilon_{i_2k_2l_2m_2}|,
\eqn
and similarly
\beqn
f^{(14,23)}(P)=\psi_{j_1i_2}^2\psi_{k_1k_2}\psi_{l_1l_2}\psi_{m_1m_2}|\epsilon_{j_1k_1l_1m_1}||\epsilon_{i_2k_2l_2m_2}|.\nonumber
\eqn
It is clear that $\psi_{j_1i_2}^2\leq \psi_{\Sigma i_2}\psi_{j_1\Sigma}$ for all $j_1,i_2$, and we have the required result.

With our newly computed forms of the squangles expressed using the Levi-Citiva tensor, we repeated the simulation study given in \citet{sumner2008} and yielded identical results.
This gives a strong experimental confirmation of the theory underlying this work, as the previous forms of the squangles were computed using the Young tableaux procedure given in \citet{sumner2008}.
Also we note that the tree-informative squangles are actually linearly dependent:
\beqn
Q_1+Q_2+Q_3=0.\nonumber
\eqn
This refines the results given in \citet{sumner2008} where this dependence was not observed.
This was missed because of the obscure nature of the basis used in the construction of the Young tableaux. 
Hopefully this article has helped to illuminate some of these issues significantly.

\section{Discussion}
In this article we have applied the representation theory of the isotropy subgroup of leaf permutations on a quartet to give a systematic procedure for finding tree-informative invariants.
In the quartet case we applied this to Markov invariants and reproduced from theoretical considerations relations that were previously derived computationally.

For general unrooted binary trees the corresponding isotropy groups arise as combinations of direct and ``wreath'' products of $\mathfrak{S}_2$ and $\mathfrak{S}_3$.
For example, in the quartet case $\mathcal{G}_{12|34}\cong \mathfrak{S}_2\wr \mathfrak{S}_2$, and for the (balanced) binary tree with 6 leaves and 3 cherries we have $\mathcal{G}_{12|34|56}\cong \mathfrak{S}_2\wr \mathfrak{S}_3$.
It would be fruitful to continue to study the representation theory of wreath product groups with an eye applications to phylogenetic problems. 
In particular, it is worth noting here that the isotropy subgroup for ``caterpillar'' (completely unbalanced) trees is isomorphic to the quartet case.
Thus the theory we have developed in this article will apply directly in that case with complication in detail only, as there are more invariants and more trees to check against for linear relations.
Additionally, for the case of completely balanced \emph{rooted} trees, the irreducible representations have been enumerated in \citet{orellana2004}.

Using leaf permutations we have been able to explicitly incorporate the underlying tree structure into the analysis of tensor-based approaches to phylogenetic problems.
This is surely a step forward, but there remains a gap between the work presented in this article and that presented in \citet{sumner2008}.
That is, one would like to derive the decomposition of the module of Markov invariants into irreducible modules of the tree isotropy groups directly without the need for any explicit computation.
This was not quite achieved in this article and presents itself as an open problem. 

More generally, the opportunity exists to derive a general duality between representations of the Markov semigroup and those of tree isotropy groups. 
This would be in analogy to the Schur-Weyl duality between representations of the general linear and the symmetric group.

\subsection*{Acknowledgement}

We thank an anonymous reviewer for helpful comments.

\subsection*{Role of funding source}

This research was conducted with support from the Australian Research Council Discovery Project grant DP0877447.

\end{document}